\newtheorem{theorem}{Theorem}
\newtheorem{lemma}[theorem]{Lemma}
\newcommand{\N}{\mathbb{N}}
\newcommand{\Z}{\mathbb{Z}}
\newcommand{\bA}{\mathbf{A}}
\newcommand{\bG}{\mathbf{G}}
\newcommand{\bB}{\mathbf{B}}
\newcommand{\bI}{\mathbf{I}}
\newcommand{\bzero}{\mathbf{0}}
\newcommand{\bone}{\mathbf{1}}
\newcommand{\R}{\mathbb{R}}
\newcommand{\bw}{\mathbf{w}}
\newcommand{\be}{\mathbf{e}}
\newcommand{\bu}{\mathbf{u}}
\newcommand{\bx}{\mathbf{x}}
\newcommand{\bz}{\mathbf{z}}
\newcommand{\bxs}{\mathbf{x}^{\text{sum}}}
\newcommand{\bws}{\mathbf{w}^{\text{sum}}}
\newcommand{\qnaesat}{NAE$\forall\exists$3SAT}
\newcommand{\naesat}{NAE3SAT}
\newcommand{\eps}{\epsilon}
\DeclareMathOperator{\lindisc}{lindisc}
\DeclareMathOperator{\poly}{poly}
\title{Linear Discrepancy is $\Pi_2$-Hard to Approximate}
\author{Pasin Manurangsi\thanks{Google Research. Email: pasin@google.com.}}
\date{\today}
\begin{document}

\maketitle

\begin{abstract}
In this note, we prove that the problem of computing the linear discrepancy of a given matrix is $\Pi_2$-hard, even to approximate within $9/8 - \eps$ factor for any $\eps > 0$. This strengthens the NP-hardness result of Li and Nikolov~\cite{LiN20} for the exact version of the problem, and answers a question posed by them. Furthermore, since Li and Nikolov showed that the problem is contained in $\Pi_2$, our result makes linear discrepancy another natural problem that is $\Pi_2$-complete (to approximate).
\end{abstract}

\section{Introduction}

The \emph{linear discrepancy}~\cite{LovaszSV86} of a matrix $\bA \in \R^{m \times n}$ is defined as 
\begin{align*}
\lindisc(\bA) := \max_{\bw \in [0, 1]^n} \min_{\bx \in \{0, 1\}^n} \|\bA(\bw - \bx)\|_{\infty}. 
\end{align*}

Besides its connection to combinatorial discrepancy and its variants~\cite{LovaszSV86}, the linear discrepancy also has applications in other areas, such as approximation algorithms (e.g.~\cite{EisenbrandPR13,Rothvoss16,HobergR17}); for example, the best known approximation algorithm for the bin packing problem~\cite{HobergR17} uses an algorithm for linear discrepancy of~\cite{LovettM15} as a subroutine. Although such an algorithm can (given $\bA$ and $\bw$) find $\bx$ that has ``small'' $\|\bA(\bx - \bw)\|_{\infty}$, the guaranteed bound does not directly involve $\lindisc(\bA)$. Thus, it is not an (approximation) algorithm for computing $\lindisc(\bA)$ given $\bA$. Li and Nikolov~\cite{LiN20} investigated the complexity of this problem; they showed that the problem belongs to $\Pi_2$ and that it is NP-hard. Due to this, they asked whether the problem is $\Pi_2$-hard (which would mean that it is complete for $\Pi_2$). We resolve this problem by showing that approximating linear discrepancy is $\Pi_2$-hard:

\begin{theorem} \label{thm:main}
It is $\Pi_2$-hard to approximate the linear discrepancy of a given matrix within $\frac{9}{8} - \eps$ factor for any $\eps > 0$.
\end{theorem}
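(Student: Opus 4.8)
The plan is to reduce from \qnaesat{} --- the $\forall\exists$-quantified version of \naesat{} --- which is $\Pi_2$-complete (I would record this as a short preliminary lemma, adapting the standard reductions showing $\Pi_2$-completeness of quantified $3$SAT and NP-completeness of \naesat{}). NAE constraints are convenient here because $\lindisc$ is invariant under the simultaneous complementation $\bw\mapsto\bone-\bw$ and $\bx\mapsto\bone-\bx$, which mirrors the invariance of an NAE constraint under flipping all of its literals. The reduction aims at a \emph{gap}: from an instance $\Phi=\forall\by\,\exists\bz\;\bigwedge_{j=1}^m C_j$ of \qnaesat{}, with $C_j=\mathrm{NAE}(\ell_{j,1},\ell_{j,2},\ell_{j,3})$, I would build a matrix $\bA$ and an absolute constant $c>0$ so that $\Phi$ true $\Longrightarrow\lindisc(\bA)\le c$ and $\Phi$ false $\Longrightarrow\lindisc(\bA)\ge\tfrac98 c$; since $c$ does not grow with the instance, Theorem~\ref{thm:main} follows.

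Write $\lindisc(\bA)=\max_{\bw}\min_{\bx}\max_j|\bA_j(\bw-\bx)|$ and think of the maximizer $\bw$ as choosing the universal assignment (plus auxiliary values) and the minimizer $\bx$ as choosing the existential assignment, a decoded copy of the universal assignment, and auxiliary values. Accordingly $\bA$ has one coordinate per universal variable, one per existential variable, and $O(1)$ auxiliary coordinates per clause, and its rows form two families. \emph{Consistency rows} of moderate magnitude --- coefficients $\Theta(c)$, large enough that violating one already costs $\ge\tfrac98 c$, yet small enough that in the completeness direction they cannot be pushed above $c$ by an adversarial fractional $\bw$ --- pin the minimizer's universal coordinates to the value committed by $\bw$, so that in the false case the minimizer is effectively handed the bad assignment $\by^*$. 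And one \emph{clause gadget} per $C_j$ --- a constant number of rows on the literal coordinates of $C_j$ together with $C_j$'s private auxiliary coordinates --- tuned to have an intrinsic $9{:}8$ gap: when the literal values form a Boolean assignment NAE-satisfying $C_j$, then for \emph{every} $\bw$ the minimizer can set the gadget's auxiliary coordinates so that all of its rows are $\le c$ in absolute value; but when the literal values violate $C_j$ (all three equal), some choice of $\bw$ on the gadget's coordinates forces, against every $\bx$, one of its rows to have absolute value $\ge\tfrac98 c$. Because $\|\cdot\|_\infty$ maximizes over rows, a per-clause gap is already an instance gap --- so, unlike for Max-CSP inapproximability, no PCP is needed and the $9/8$ factor is delivered by the gadget itself (after rescaling, ``satisfied'' sits at $\le 8$ and ``violated'' at $\ge 9$).

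The two directions are then bookkeeping. \emph{Completeness:} given any $\bw\in[0,1]^n$, round its universal coordinates to a Boolean $\by$, use truth of $\Phi$ to pick $\bz$ with every $C_j(\by,\bz)$ NAE-satisfied, and let $\bx$ carry $\by$ on the universal coordinates, $\bz$ on the existential ones, and the gadget-optimal auxiliary values; then every consistency row is $\le c$ (they are scaled so that this rounding costs at most $c$ on the row), and every gadget row is $\le c$ by the first gadget property --- which is why that property must hold for \emph{all} $\bw$, since it has to absorb both the rounding error on universal literals and the entirely arbitrary values $\bw$ places on existential and auxiliary coordinates. \emph{Soundness:} let $\by^*$ witness falsity of $\Phi$; the maximizer plays $\by^*$ on the universal coordinates and the forcing values on every clause gadget. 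For any $\bx$, either $\bx$ disagrees with $\by^*$ on a universal coordinate, so some consistency row is $\ge\tfrac98 c$, or $\bx$ induces a Boolean assignment $(\by^*,\bz)$, which must violate some $C_{j_0}$, whence the second gadget property makes a row of the $C_{j_0}$-gadget $\ge\tfrac98 c$; either way $\min_\bx\max_j|\bA_j(\bw-\bx)|\ge\tfrac98 c$.

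The crux --- and the step I expect to be the main obstacle --- is the simultaneous design of the forcing mechanism and the single-clause NAE gadget. The gadget must exhibit the sharp $9{:}8$ ratio \emph{and} keep its ``$\le c$'' guarantee in the satisfied case against a fully adversarial, possibly fractional, choice by $\bw$ of the gadget's and the literal variables' coordinates --- all while the minimizer stays genuinely free to pick the existential assignment, which forbids putting large coefficients on the existential coordinates (those would otherwise make ``reading'' a literal easy). Threading this needle is what fixes the coefficient pattern (moderate consistency rows versus finely tuned gadget rows) and is where essentially all the work lies; by comparison the $\Pi_2$-completeness of \qnaesat{} and the two directions above are routine.
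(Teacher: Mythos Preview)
Your high-level plan is right and matches the paper's: reduce from \qnaesat, interpret the outer $\max_{\bw}$ as the universal quantifier and the inner $\min_{\bx}$ as the existential one, and aim for a gap of $3/2$ versus $4/3$ (i.e.\ ratio $9/8$). The diagnosis of the crux --- that completeness must hold against a fully adversarial fractional $\bw$, while soundness only needs a single well-chosen $\bw$ --- is also exactly right.

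Where your plan diverges from the paper, and where the missing idea lies, is the decomposition. You propose \emph{per-clause} gadgets with private auxiliary coordinates, and you correctly flag that such a gadget must keep every row below $c$ even when $\bw$ is adversarial on the literal coordinates and the minimizer's $\bx$ on those coordinates is already committed to the global assignment. You do not construct this gadget, and I do not see how to: the minimizer has no slack on the shared literal coordinates, so the auxiliary coordinates alone would have to cancel an essentially arbitrary contribution from the literal part, for every $\bw$ simultaneously.

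The paper sidesteps this obstacle by using a \emph{per-variable} gadget instead. Each variable is triplicated, and the $3\times 3$ matrix
\[
\bG=\begin{bmatrix}1&1&-1\\1&-1&1\\-1&1&1\end{bmatrix}
\]
is placed on the three copies. Its soundness is what you would expect: at $\bw=\tfrac12\bone$, any $\bz\in\{0,1\}^3$ with $\|\bG(\tfrac12\bone-\bz)\|_\infty<3/2$ must be $\bzero$ or $\bone$, so the three copies agree. The crucial completeness property is a \emph{sign-steering} guarantee: for every $\bu\in[0,1]^3$ and every desired sign $b\in\{-1,+1\}$, there is $\bz\in\{0,1\}^3$ with $\|\bG(\bu-\bz)\|_\infty\le 4/3$, $b\cdot\bone^{T}(\bu-\bz)\ge 0$, and $|\bone^{T}(\bu-\bz)|\le 2$. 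The clause rows are then just $\tfrac13\bB$ replicated across the three blocks, with \emph{no per-clause auxiliaries}; because the minimizer can steer the sign of each $\bws_i-\bxs_i$ to match the NAE assignment and keep its magnitude at most $2$, any clause row picks up at most two terms of one sign and one of the other, giving $\tfrac13\cdot 4=4/3$. This is precisely how the paper absorbs the adversarial $\bw$ that your per-clause gadget would have had to absorb internally.

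For the $\forall$ layer the paper also does not use a scaled identity row alone; it adds one extra coordinate $\bw^*_i$ per universal variable together with two rows, $\tfrac23(\bws_i-\bxs_i)-2(\bw^*_i-\bx^*_i)$ and $\tfrac83(\bw^*_i-\bx^*_i)$. In soundness the maximizer places $\bw^*_i\in\{1/3,2/3\}$ to force $\bx^*_i$ and then, via the link row, the triplicated variable; in completeness the minimizer rounds $\bw^*_i$ and the same sign-steering property keeps both rows at most $4/3$. So the architecture is: variable-tripling plus a single per-variable $\bG$ gadget carries the whole $9/8$ gap, and the $\Pi_2$ step is a small add-on. Your outline has the right skeleton, but the key technical organ --- the sign-steering per-variable gadget --- is what you are missing.
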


\paragraph{Other Related Works.}
The computational complexity of computing or approximating several notions of discrepancy has been investigated in recent years. The aforementioned work of Li and Nikolov~\cite{LiN20} gave several algorithms for $\lindisc$ when $m$ is small, and a polynomial time $O(2^n)$-approximation algorithm in the general case. For the classic notion of (combinatorial) discrepancy, Charikar et al.~\cite{CharikarNN11} showed that it is NP-hard to distinguish even the case that the discrepancy is zero and the case where the discrepancy is $\Omega(\sqrt{m})$ when $m = O(n)$; this is essentially the strongest possible hardness for the problem since the algorithms of \cite{Bansal10,LovettM15} ensures $O(\sqrt{m})$ bound for any $\bA$ with $m = O(n)$. For the hereditary discrepancy, Matoušek et al.~\cite{MVT20} gave a polylogarithmic approximation algorithm for the problem while Austrin et al.~\cite{AustrinGH17} showed that it is NP-hard to approximate beyond a factor of 2.

As pointed out in~\cite{LiN20}, linear discrepancy is also related to the covering radius of a lattice, which can be defined in a similar manner as linear discrepancy except $\bx$ is over all $\Z^n$ (instead of $\{0, 1\}^n$). For this problem, Haviv and Regev~\cite{HavivR12} showed that it is $\Pi_2$-hard to approximate within some constant factor. Another related problem is that of computing the covering radius of a linear error correcting code; this problem is known to be $\Pi_2$-hard to approximate within some constant factor and NP-hard to approximate within $\Omega(\log \log n)$ factor~\cite{GuruswamiMR04}. Since both of these problems belong to $\Pi_2$, they are examples of problems which are $\Pi_2$-complete to approximate within some constant factor; with our main result, computing linear discrepancy now joins this class of problems.

\section{Notations}

For $k \in \N$, we use $[k]$ as a shorthand for $\{1, \dots, k\}$. We use $\bI_k$ to denote the $(k \times k)$ identity matrix. We use $\bone_k$ (resp. $\bzero_k$) to denote the $k$-dimensional all-ones (resp. all-zeros) vector. Similarly, we use $\bone_{k \times k'}$ (resp. $\bzero_{k \times k'}$) to denote the $(k \times k')$ all-ones (resp. all-zeros) matrix. When the dimensions are clear from context, we may discard the subscript and simply write $\bone$ or $\bzero$. We let $\be^{(i)}$ denote the $i$-th element of the standard basis, i.e. the vector whose $i$-th entry is one and all other entries are zero.

For $\bA \in \R^{m \times n}$ and $\bw \in [0, 1]^n$, let $\lindisc(\bA, \bw)$ denote $\min_{\bx \in \{0, 1\}^n} \|\bA(\bw - \bx)\|_{\infty}$. Note that this means that $\lindisc(\bA) = \max_{\bw \in [0, 1]^n} \lindisc(\bA, \bw)$; each maximizer $\bw$ is said to be a \emph{deep hole} of $\bA$.

\section{Warm-up: NP-hardness of Approximating $\lindisc$}

In this section, we will prove a weaker version of \Cref{thm:main} in which $\Pi_2$-hardness is relaxed to only NP-hardness, as stated below. While this result is of course subsumed by \Cref{thm:main}, it demonstrates the main new gadget required in our work, which will also be used for the $\Pi_2$-hardness result.

\begin{theorem} \label{thm:np-hard}
It is NP-hard to approximate the linear discrepancy of a given matrix within $\frac{9}{8} - \eps$ factor for any $\eps > 0$.
\end{theorem}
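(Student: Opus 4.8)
The plan is to reduce from $\naesat$, which is NP-complete, and to produce from an instance $\phi$ on variables $z_1,\dots,z_n$ and clauses $C_1,\dots,C_m$ a matrix $\bA_\phi$ whose linear discrepancy is ``small'' when $\phi$ is NAE-satisfiable and a $\tfrac98$ factor larger when it is not. The anchor of the construction is the elementary identity $\lindisc(\bA,\tfrac12\bone)=\tfrac12\,\mathrm{disc}(\bA)$ valid for every matrix $\bA$ (where $\mathrm{disc}(\bA)=\min_{\bz\in\{-1,1\}^n}\|\bA\bz\|_\infty$ is the combinatorial discrepancy), which holds because $\tfrac12\bone-\bx$ ranges over $\{\pm\tfrac12\}^n$ as $\bx$ ranges over $\{0,1\}^n$. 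Applied to the $\pm1$ clause--literal incidence matrix $\bA_\phi^{\mathrm{core}}$ (each row has a $+1$ for every positive literal and a $-1$ for every negative literal of the corresponding clause), a NAE-satisfying assignment of $\phi$ is exactly a $\pm1$ coloring making every row-sum equal to $\pm1$, whereas a monochromatic clause forces a row-sum of $\pm3$; hence $\mathrm{disc}(\bA_\phi^{\mathrm{core}})=1$ in the \emph{yes} case and $\mathrm{disc}(\bA_\phi^{\mathrm{core}})=3$ in the \emph{no} case. Since the core rows only touch the original columns, fixing $\bw=\tfrac12\bone$ on those columns shows $\lindisc(\bA_\phi)\ge\tfrac32$ (up to a global scaling) whenever $\phi$ is not NAE-satisfiable, regardless of what the gadget adds. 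So the entire content is the \emph{yes}-case upper bound: $\lindisc(\bA_\phi^{\mathrm{core}})$ is a maximum over all of $[0,1]^n$, and deep holes other than $\tfrac12\bone$ can be large even when $\phi$ is NAE-satisfiable, so the core alone does not give any gap.

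This is where the new gadget enters. The idea is to attach to $\bA_\phi^{\mathrm{core}}$ a small block of auxiliary columns and rows per variable (roughly, scaled identity-type rows that constrain how any low-discrepancy rounding $\bx$ may behave on each coordinate) together with a fixed per-clause gadget, with all weights tuned so that the combined matrix $\bA_\phi$ has two properties for a normalizing constant $d$. First, in the \emph{no} case, extending $\tfrac12\bone$ by the prescribed fractional pattern on the auxiliary columns still yields a deep hole of value $\ge\tfrac98 d$. Second, in the \emph{yes} case, \emph{every} $\bw\in[0,1]^N$ admits a rounding of cost $\le d$: one rounds the auxiliary coordinates to their nearest integers and rounds the original coordinates guided by a NAE-satisfying assignment of $\phi$, the gadget being designed precisely so that the permissible local corrections never exceed $d$ on any row. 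Optimizing the weight trade-off between the identity-type rows and the clause-gadget rows makes the ratio of the \emph{no}-value to the achievable $d$ equal to $\tfrac98$, which gives the $\tfrac98-\eps$ inapproximability. Because the objective aggregates rows by a maximum, a single violated clause suffices in the \emph{no} case, so a reduction from plain $\naesat$ (rather than a gap version) already yields the theorem.

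The main obstacle is exactly the \emph{yes}-case upper bound: showing that \emph{all} deep holes, not just $\tfrac12\bone$, admit a cheap rounding, and doing so with the tight constant. This forces the gadget to be built by hand and, I expect, is where the number $9/8$ is actually determined --- the gadget cannot pin the rounding down so rigidly that a monochromatic clause would be penalized by the full $\tfrac32$, because that same rigidity would manufacture an expensive deep hole in the \emph{yes} case, so $\tfrac98$ should be the optimal constant in this tension. Two routine points to check alongside are that literal negations are handled correctly by the sign patterns in both $\bA_\phi^{\mathrm{core}}$ and the gadget, and that the per-variable gadget blocks compose without cross-interaction (which follows from placing them on disjoint auxiliary columns and from the $\ell_\infty$ aggregation across rows).
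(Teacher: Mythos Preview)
Your high-level plan matches the paper's: reduce from \naesat, use the clause--literal $\pm 1$ matrix as a core, observe that $\bw=\tfrac12\bone$ gives the $3/2$ lower bound in the NO case, and recognise that the entire difficulty is bounding $\lindisc$ over \emph{all} $\bw$ in the YES case via a per-variable gadget. You have correctly located the crux.

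But there is a genuine gap: you do not construct the gadget, and the gadget is the whole technical content of the proof. You describe it only as ``auxiliary columns and rows per variable \dots\ together with a fixed per-clause gadget'' with ``scaled identity-type rows,'' and then defer (``the gadget to be built by hand''). That is an outline, not an argument. Moreover, the shape you sketch does not match what actually works: in the paper there is no per-clause gadget and no genuinely new auxiliary columns. Instead one \emph{triplicates} every variable column, so the clause block becomes $\tfrac13\bB\mid\tfrac13\bB\mid\tfrac13\bB$, and for each variable $i$ adds three rows given by the fixed $3\times3$ matrix
\[
\bG=\begin{bmatrix}1&1&-1\\1&-1&1\\-1&1&1\end{bmatrix}
\]
acting on the three copies of column $i$. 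The key lemma (a short case analysis) says: for every $\bu\in[0,1]^3$ and every sign $b\in\{\pm1\}$ there is $\bz\in\{0,1\}^3$ with $\|\bG(\bu-\bz)\|_\infty\le 4/3$, $b\cdot\bone^T(\bu-\bz)\ge 0$, and $|\bone^T(\bu-\bz)|\le 2$; conversely, if $\bu=\tfrac12\bone$ and $\|\bG(\bu-\bz)\|_\infty<3/2$ then $\bz\in\{\bzero,\bone\}$. In the YES case one picks $b$ from a NAE assignment, and the sign-plus-magnitude control on the column sums caps every clause row at $4/3$; in the NO case the soundness forces all three copies equal at $\bw=\tfrac12\bone$, recovering $3/2$. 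The two thresholds are exactly $4/3$ and $3/2$, giving the ratio $9/8$; your abstract ``normalising constant $d$ and a $\tfrac98 d$ deep hole'' is not how the constants arise. Without this triplication-plus-$\bG$ construction (or an explicit equivalent with a verified lemma), your outline does not close.
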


Following~\cite{LiN20}, we reduce from the \naesat\ problem where we are given a set $V$ of variables and a 3CNF formula $\phi$ over $V$. The goal is to determine whether there exists an assignment to $V$ which makes every clause of $\phi$ has at least one literal evaluated to true and at least one literal evaluated to false.

We will now give an informal intuition for our proof. We will sometimes be vague; everything will be formalized below. Our reduction builds on the reduction of~\cite{LiN20}, which works by viewing each clause as a row vector of the matrix $\bA$ in a natural manner, i.e., the $i$-th entry is 1 if the literal $x_i$ is present, -1 if the literal $\neg x_i$ is present and 0 otherwise. It is not hard to see that, when the starting instance is a NO instance of the \naesat problem, then $\lindisc(\bA, 0.5 \cdot \bone) \geq 3/2$. Similarly, in the YES case, $\lindisc(\bA, 0.5 \cdot \bone) \leq 1/2$. The latter unfortunately is insufficient to conclude that $\lindisc(\bA)$ is small, as $0.5 \cdot \bone$ may not be a deep hole of $\bA$. Nonetheless, Li and Nikolov managed to use the fact that a deep hole has a polynomial bit complexity to prove that $\lindisc(\bA, 0.5 \cdot \bone) \leq 3/2 - \exp(-\poly(nm))$ in the YES case. Thus, their result only implies hardness of approximation with factor only $1 + \exp(-\poly(nm))$.

As one can see from the above outline, the most important challenge in the above reduction is in bounding $\lindisc(\bA, \bw)$ for $\bw \ne 0.5 \cdot \bone$ in the YES case. Our main idea is to make multiple (specifically 3) copies of each column. We then add gadgets on them in such a way that, when $\bw = 0.5 \cdot \bone$, the three columns are forced to have the same value (and hence the NO case remains similar to before). In the YES case, we can show that, while our gadget is restrictive when $\bw = 0.5 \cdot \bone$, it is ``not as restrictive'' for $\bw$ far from $0.5 \cdot \bone$, which eventually allows us to overcome the previously challenging scenario.

\subsection{Our Gadget}
Our gadget is a simple $(3 \times 3)$ matrix $\bG$ that can be used to enforce the three columns to be the same when the target vector $\bu$ is $0.5 \cdot \bone$, which will be used in the NO (i.e. soundness) case. On the other hand, for any target vector $\bu$ (not necessarily equal to $0.5 \cdot \bone$) and any sign $b \in \{-1, +1\}$, we can find a 0-1 vector $\bz$ with low discrepancy with respect to $G$ while also maintaining that the sign of the sum of entries of $\bu - \bz$ agrees with $b$. And that such a sum does not have too large absolute value. We note that, in the reduction, the sign $b$ will be selected according to the assignment of the \naesat\ instance, and that the sign agreement together with the absolute value bound help ensure that the linear discrepancy is small in the YES case.

Our gadget's properties are formalized below.


\begin{lemma} \label{lem:gadget-half}
Let $\bG = \begin{bmatrix} 1 & 1 & -1 \\ 1 & -1 & 1 \\ -1 & 1 & 1 \end{bmatrix}$. Then, the following holds:
\begin{itemize}
\item (Completeness) For any $\bu \in [0, 1]^3$ and $b \in \{-1, 1\}$, there exists $\bz \in \{0, 1\}^3$ such that 
\begin{itemize}
\item (Low Discrepancy w.r.t. $\bG$) $\|\bG(\bu - \bz)\|_{\infty} \leq 4/3$.
\item (Sign Agreement) $b \cdot (\bone^T (\bu - \bz)) \geq 0$
\item (Low Discrepancy w.r.t. $\bone^T$) $|\bone^T (\bu - \bz)| \leq 2.$
\end{itemize}
\item (Soundness) If $\bz \in \{0, 1\}^3$ such that $\|\bG(0.5 \cdot \bone - \bz)\|_\infty < 3/2$, then $\bz$ is either $\bone$ or $\bzero$.
\end{itemize}
\end{lemma}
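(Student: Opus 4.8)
The plan is to start from a structural identity for $\bG$. One checks entrywise that for every $\bv\in\R^3$, $\bG\bv=(\bone^T\bv)\,\bone-2\,(v_3,v_2,v_1)^T$; consequently $\|\bG\bv\|_\infty=\max_{i}\bigl|(\bone^T\bv)-2v_i\bigr|$. Writing $v_{(1)}\le v_{(2)}\le v_{(3)}$ for the sorted entries of $\bv$ and $s=\bone^T\bv$, the extreme values of $s-2v_i$ over $i$ are $s-2v_{(1)}=v_{(2)}+(v_{(3)}-v_{(1)})$ and $s-2v_{(3)}=v_{(2)}-(v_{(3)}-v_{(1)})$, so a one-line computation yields the clean identity $\|\bG\bv\|_\infty=|v_{(2)}|+(v_{(3)}-v_{(1)})$ --- the absolute value of the median of $\bv$ plus its range. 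In particular $\|\bG\bv\|_\infty$ and $\bone^T\bv$ depend on $\bv$ only through its multiset of entries, so every condition in the lemma is invariant under simultaneously permuting the coordinates of $\bu$ and $\bz$, which lets me assume a convenient coordinate order throughout.

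Soundness is then immediate. For $\bz\in\{0,1\}^3$ the vector $\bv:=0.5\cdot\bone-\bz$ has entries in $\{-\tfrac12,\tfrac12\}$; if they are not all equal, two of them equal some $c$ and the third equals $-c$, so the sorted triple is $(-\tfrac12,\tfrac12,\tfrac12)$ or $(-\tfrac12,-\tfrac12,\tfrac12)$ and the identity gives $\|\bG\bv\|_\infty=\tfrac12+1=\tfrac32$, contradicting $\|\bG(0.5\cdot\bone-\bz)\|_\infty<\tfrac32$. Hence the entries of $\bv$ are all equal, i.e. $\bz\in\{\bzero,\bone\}$. (One can instead simply tabulate $\|\bG(0.5\cdot\bone-\bz)\|_\infty$ over all eight $\bz$.)

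For completeness I use two reductions. First, the involution $\bu\mapsto\bone-\bu$: given $(\bu,-1)$, run the $b=+1$ construction on $\bone-\bu$ to get $\bz'$ and take $\bz=\bone-\bz'$, noting $\bu-\bz=-((\bone-\bu)-\bz')$, which negates both $\bG(\bu-\bz)$ and $\bone^T(\bu-\bz)$ and hence transfers all three properties; so assume $b=+1$. Second, by the symmetry above, assume $u_1\le u_2\le u_3$, and set $a:=u_1+u_2+u_3$. I then choose $\bz$ as follows: if $a\le2$ and $a-2u_1\le\tfrac43$, take $\bz=\bzero$; if $a\le2$ and $a-2u_1>\tfrac43$, flip the largest coordinate, $\bz=\be^{(3)}$; if $a>2$, flip the two largest, $\bz=\be^{(2)}+\be^{(3)}$. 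In each case $\bone^T(\bu-\bz)=a-\bone^T\bz$ lies in $[0,2]$ (as $a$ lies in $[0,2]$, in $(\tfrac43,2]$, and in $(2,3]$ respectively --- note $a-2u_1>\tfrac43$ forces $a>\tfrac43$), which at once yields Sign Agreement (since $b=+1$) and the bound $|\bone^T(\bu-\bz)|\le2$. It remains to bound $\|\bG(\bu-\bz)\|_\infty$ by $\tfrac43$ via the median-plus-range identity: in the first case it equals $u_2+u_3-u_1=a-2u_1\le\tfrac43$ by hypothesis; in the second it equals $1+(u_1+u_2-u_3)=1+(a-2u_3)$, and $a-2u_3<\tfrac13$ follows from $a-2u_1>\tfrac43$, $u_2\le u_3$, and $a\le2$ (which give $u_3>\tfrac a4+\tfrac13\ge\tfrac a2-\tfrac16$); in the third it equals $2-(a-2u_1)$, and $a-2u_1\ge\tfrac a3>\tfrac23$ since $u_1\le\tfrac a3$ and $a>2$.

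I expect the completeness step to be the crux: the target $\tfrac43$ is attained, leaving no slack, so one must flip the right number of coordinates and, when $a\le2$, split on the right threshold ($a-2u_1\lessgtr\tfrac43$) between flipping none and flipping one. The median-plus-range identity is exactly what collapses each case to a single elementary inequality, whereas the Sign Agreement and $|\bone^T(\bu-\bz)|\le2$ conditions fall out for free once the flip is fixed.
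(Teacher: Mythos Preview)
Your proof is correct and follows essentially the same route as the paper: the same symmetry reductions to $b=+1$ and $u_1\le u_2\le u_3$, the same three-way case split on whether $a:=u_1+u_2+u_3$ exceeds $2$ and (when $a\le 2$) whether $a-2u_1=-u_1+u_2+u_3$ exceeds $4/3$, and the identical choice of $\bz\in\{\bzero,\be^{(3)},\be^{(2)}+\be^{(3)}\}$ in each case. The one substantive difference is your structural identity $\|\bG\bv\|_\infty=|v_{(2)}|+(v_{(3)}-v_{(1)})$ (median-plus-range), which the paper does not isolate; the paper instead bounds each of the three entries of $\bG(\bu-\bz)$ by hand in every case. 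Your identity is a genuine streamlining---it turns each case's discrepancy check into a single inequality and also makes the soundness a one-liner---but the underlying argument and the tight constants are the same.
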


\begin{proof}

\textbf{(Completeness)} Due to symmetry, we may assume w.l.o.g. that $b = 1$ and that $\bu_1 \leq \bu_2 \leq \bu_3$. We then consider three cases as follows:

\begin{itemize}
\item Case I: $\bu_1 + \bu_2 + \bu_3 \geq 2$. In this case, pick $\bz_1 = 0, \bz_2 = \bz_3 = 1$. Let us now verify the three desired properties below.
\begin{itemize}
\item (Low Discrepancy w.r.t. $\bG$) We have
\begin{align*}
\bG(\bu - \bz) =
\begin{bmatrix}
\bu_1 + \bu_2 - \bu_3 \\
\bu_1 - \bu_2 + \bu_3 \\
-\bu_1 + \bu_2 + \bu_3 - 2
\end{bmatrix}
\end{align*}
Let us now bound each entry of the above vector. For the first entry, we have
\begin{align*}
-1 \leq -\bu_3 \leq \bu_1 + \bu_2 - \bu_3 \leq \bu_1 \leq 1.
\end{align*}
For the second entry, we have
\begin{align*}
-1 \leq -\bu_2 \leq \bu_1 - \bu_2 + \bu_3 \leq \bu_2 \leq 1.
\end{align*}
For the third entry, we can upper bound it by
\begin{align*}
-\bu_1 + \bu_2 + \bu_3 - 2 \leq 0 + 1 + 1 - 2 = 0.
\end{align*}
For the lower bound of the third entry, we have
\begin{align*}
-\bu_1 + \bu_2 + \bu_3 - 2 \geq \bu_3 - 2 \geq 2/3 - 2 = -4/3.
\end{align*}
where the second inequality follows from $\bu_3 \geq \bu_2 \geq \bu_1$ and $\bu_1 + \bu_2 + \bu_3 \geq 2$.

As a result, we have $\|\bG(\bu - \bz)\|_\infty \leq 4/3$.

\item (Sign Agreement) From our assumption $\bu_1 + \bu_2 + \bu_3 \geq 2$, we have
\begin{align*}
\bone^T (\bu - \bz) = \bu_1 + \bu_2 + \bu_3 - 2 \geq 0.
\end{align*}
\item (Low Discrepancy w.r.t. $\bone^T$)
Since $\bu_1, \bu_2, \bu_3 \leq 1$, we have
\begin{align*}
\bone^T (\bu - \bz) = \bu_1 + \bu_2 + \bu_3 - 2 \leq 3 - 2 = 1.
\end{align*}
From this and the sign agreement shown above, we have $|\bone^T (\bu - \bz)| \leq 1$.
\end{itemize}

\item Case II: $\bu_1 + \bu_2 + \bu_3 < 2$ and $-\bu_1 + \bu_2 + \bu_3 \leq 4/3$. In this case, pick $\bz_1 = \bz_2 = \bz_3 = 0$. Let us now verify the three desired properties below.
\begin{itemize}
\item (Low Discrepancy w.r.t. $\bG$) We have
\begin{align*}
\bG(\bu - \bz) =
\begin{bmatrix}
\bu_1 + \bu_2 - \bu_3 \\
\bu_1 - \bu_2 + \bu_3 \\
-\bu_1 + \bu_2 + \bu_3
\end{bmatrix}
\end{align*}
Let us now bound each entry of the above vector. The first two entries can be bounded in the same matter as in Case I. For the third entry, we have
\begin{align*}
0 \leq \bu_3 \leq -\bu_1 + \bu_2 + \bu_3 \leq 4/3,
\end{align*}
where the right most inequality follows from the the second assumption in this case.

As a result, we have $\|\bG(\bu - \bz)\|_\infty < 4/3$.
\item (Sign Agreement) We have
\begin{align*}
\bone^T (\bu - \bz) = \bu_1 + \bu_2 + \bu_3 \geq 0.
\end{align*}
\item (Low Discrepancy w.r.t. $\bone^T$)
From our assumption $\bu_1 + \bu_2 + \bu_3 < 2$, we have
\begin{align*}
\bone^T (\bu - \bz) = \bu_1 + \bu_2 + \bu_3 < 2.
\end{align*}
\end{itemize}

\item Case III: $\bu_1 + \bu_2 + \bu_3 < 2$ and $-\bu_1 + \bu_2 + \bu_3 > 4/3$. In this case, pick $\bz_1 = \bz_2 = 0, \bz_3 = 1$. Let us now verify the three desired properties below.
\begin{itemize}
\item (Low Discrepancy w.r.t. $\bG$) We have
\begin{align*}
\bG(\bu - \bz) =
\begin{bmatrix}
\bu_1 + \bu_2 - \bu_3 + 1 \\
\bu_1 - \bu_2 + \bu_3 - 1 \\
-\bu_1 + \bu_2 + \bu_3 - 1
\end{bmatrix}
\end{align*}

Let us now bound each entry of the above vector. 
For the first entry, we have
\begin{align*}
0 \leq -\bu_3 + 1 \leq \bu_1 + \bu_2 - \bu_3 + 1  \leq \bu_1 + 1 = \frac{1}{2}\left((\bu_1 + \bu_2 + \bu_3) - (-\bu_1 + \bu_2 + \bu_3)\right) + 1 < 4/3,
\end{align*}
where the right most inequality follows from the two assumptions of this case.
For the second entry, we have
\begin{align*}
-1 \leq \bu_1 - 1 \leq \bu_1 - \bu_2 + \bu_3 - 1 \leq \bu_3 - 1 \leq 0.
\end{align*}
For the third entry, we similarly have
\begin{align*}
-1 \leq \bu_3 - 1 \leq -\bu_1 + \bu_2 + \bu_3 - 1 \leq \bu_1 + \bu_2 + \bu_3 - 1 < 1,
\end{align*}
where the last inequality follows from the assumption $\bu_1 + \bu_2 + \bu_3 < 2$.
As a result, we have $\|\bG(\bu - \bz)\|_\infty < 4/3$.

\item (Sign Agreement) From $-\bu_1 + \bu_2 + \bu_3 > 4/3$, we have
\begin{align*}
\bone^T (\bu - \bz) = \bu_1 + \bu_2 + \bu_3 - 1 \geq -\bu_1 + \bu_2 + \bu_3 - 1 > 0.
\end{align*}
\item (Low Discrepancy w.r.t. $\bone^T$)
From our assumption $\bu_1 + \bu_2 + \bu_3 < 2$, we have
\begin{align*}
\bone^T (\bu - \bz) = \bu_1 + \bu_2 + \bu_3 - 1 < 1.
\end{align*}

\end{itemize}
\end{itemize}
In all cases, we have found a desired $\bz$. This concludes the proof of completeness.

\paragraph{(Soundness)} Consider any $\bz \in \{0, 1\}^3$ such that $\bz \ne \bone, \bzero$. Due to symmetry, we may w.l.o.g. assume $\bz = \begin{bmatrix} 1 \\ 1 \\ 0 \end{bmatrix}$. In this case, we have
\begin{align*}
\|\bG(0.5 \cdot \bone - \bz)\|_{\infty} = \left\|\begin{bmatrix} -1.5 \\ 0.5 \\ 0.5 \end{bmatrix}\right\|_{\infty} \geq 1.5,
\end{align*} 
which concludes our proof.
\end{proof}

\subsection{The Reduction}

Having described our gadget, we will now proceed to the reduction, whose properties are summarized below in \Cref{lem:np-reduction}. Note that this, together with NP-hardness of \naesat\ (e.g.~\cite{Schaefer78}), implies \Cref{thm:np-hard}.

\begin{lemma} \label{lem:np-reduction}
There exists a polynomial-time reduction that takes in an instance $(V, \phi)$ of \naesat\ and produces a matrix $\bA$ such that the following holds.
\begin{itemize}
\item (Completeness) If $(V, \phi)$ is a YES instance of \naesat, then $\lindisc(\bA) \leq 4/3$,
\item (Soundness) If $(V, \phi)$ is a NO instance of \naesat, then $\lindisc(\bA) \geq 3/2$.
\end{itemize}
\end{lemma}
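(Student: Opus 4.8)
The plan is to carry out the outline in the text: triple each variable into a column‑block and attach one copy of $\bG$ per variable to tie the three copies together when the target is $0.5\cdot\bone$. Concretely, I would index the columns of $\bA$ by pairs $(v,j)$ with $v\in V$ and $j\in\{1,2,3\}$; for each $v\in V$ put three rows equal to $\bG$ on the three columns of $v$ and zero elsewhere; and for each clause $C=\ell_1\vee\ell_2\vee\ell_3$, writing $v_a$ for the variable of $\ell_a$ and $s_a\in\{+1,-1\}$ for its sign ($+1$ for a positive literal), put one row with entry $s_a/3$ in each column $(v_a,j)$ and zeros elsewhere. This is polynomial time. For bookkeeping, given $\bw$ and $\bx$ I write $\bu_v,\bz_v\in[0,1]^3$ for their restrictions to the columns of $v$ and $\sigma_v:=\bone^T(\bu_v-\bz_v)$; then $\bA(\bw-\bx)$ equals $\bG(\bu_v-\bz_v)$ on the three gadget rows of $v$, and $\tfrac13\sum_{a=1}^3 s_a\sigma_{v_a}$ on the row of clause $C$.

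\textbf{Soundness.} Here I only need one good target, so I would take $\bw=0.5\cdot\bone$ and show every $\bx$ has $\|\bA(0.5\cdot\bone-\bx)\|_\infty\ge 3/2$. If $\bz_v\notin\{\bzero,\bone\}$ for some $v$, the soundness part of \Cref{lem:gadget-half} already makes the gadget rows of $v$ do the job. Otherwise each $\bz_v$ is $\bzero$ or $\bone$, so $\bx$ encodes an assignment $\alpha$ (say $\alpha(v)$ true iff $\bz_v=\bzero$), and $\sigma_v=\pm 3/2$ with sign recording the truth value of $v$. Since $\phi$ is a NO instance, $\alpha$ leaves some clause with all three literals of the same truth value, and for that clause a one‑line computation gives $\bigl|\tfrac13\sum_a s_a\sigma_{v_a}\bigr|=3/2$. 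Hence $\lindisc(\bA)\ge\lindisc(\bA,0.5\cdot\bone)\ge 3/2$.

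\textbf{Completeness.} Fix an assignment $\alpha$ that NAE‑satisfies $\phi$ and set $b_v:=+1$ if $\alpha(v)$ is true and $b_v:=-1$ otherwise. Given an arbitrary $\bw\in[0,1]^{3|V|}$, I would, for each $v$, apply the completeness part of \Cref{lem:gadget-half} to $\bu:=\bu_v$ with $b:=b_v$ to obtain $\bz_v$ with $\|\bG(\bu_v-\bz_v)\|_\infty\le 4/3$, $b_v\sigma_v\ge 0$, and $|\sigma_v|\le 2$, and then let $\bx$ be the concatenation of the $\bz_v$'s. The gadget rows are then within $4/3$ automatically. For a clause $C=\ell_1\vee\ell_2\vee\ell_3$, the sign‑agreement bound gives $s_a\sigma_{v_a}=(s_ab_{v_a})\,|\sigma_{v_a}|$, which is $+|\sigma_{v_a}|$ when $\ell_a$ is satisfied by $\alpha$ and $-|\sigma_{v_a}|$ when it is not; writing $S$ and $U$ for the sets of satisfied and unsatisfied literals of $C$, we get $\tfrac13\sum_a s_a\sigma_{v_a}=\tfrac13\bigl(\sum_{a\in S}|\sigma_{v_a}|-\sum_{a\in U}|\sigma_{v_a}|\bigr)$, and since $\alpha$ NAE‑satisfies $C$ both $|S|$ and $|U|$ lie in $\{1,2\}$. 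As each $|\sigma_{v_a}|\le 2$ and each sum has at most two terms, this value lies in $[-4/3,4/3]$, so $\|\bA(\bw-\bx)\|_\infty\le 4/3$ and $\lindisc(\bA)\le 4/3$.

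\textbf{Where the work is.} I expect the only real content to be that last estimate, together with fixing the conventions that link the gadget's sign $b_v$ to literal polarity and to the sign of a clause's contribution. Two points make it go through and are worth stressing: $b_v$ is chosen from $\alpha(v)$ alone, so a single $\bz_v$ (hence a single $\sigma_v$) serves every clause containing $v$ at once; and it is exactly the gadget bound $|\sigma_v|\le 2$ together with the NAE promise (which forces at most two satisfied and at most two unsatisfied literals per clause) that pulls the clause‑row discrepancy down from the naive $\tfrac13\cdot 6=2$ to the required $4/3$. All the sign/case checking one might otherwise fear is already packaged inside \Cref{lem:gadget-half}.
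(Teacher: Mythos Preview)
Your proposal is correct and is essentially the paper's proof: the same triplicated clause matrix with one copy of $\bG$ per variable, the same choice $\bw=0.5\cdot\bone$ for soundness, and the same completeness argument picking $\bz_v$ via \Cref{lem:gadget-half} with sign dictated by a fixed NAE-satisfying assignment, then bounding each clause row by noting at most two terms of each sign and each of magnitude at most $2$. The only cosmetic differences are the column ordering (you group the three copies of a variable together, the paper groups by copy index) and the sign convention for $b_v$ (you take $b_v=+1$ for true, the paper takes $b=1-2\psi(v_i)$, i.e.\ $+1$ for false); neither affects anything.
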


\begin{proof}
Let the variables in $V$ be $v_1, \dots, v_n$, and let the clauses in $\phi$ be $C_1, \dots, C_m$. We first create a matrix $\bB \in \R^{m \times n}$ as follows. For every $j \in [m]$, suppose that $C_j$ contains the literals\footnote{We assume w.l.o.g. that it contains exactly three literals; otherwise, we can just replicate one of the literals.} $b_1 v_{i_1}$, $b_2 v_{i_2}$ and $b_3 v_{i_3}$, where $b_1, b_2, b_3 \in \{-1, 1\}$ indicate whether the literals are negated. We let the $j$-th row of $\bB$ be $b_1 \be^{(i_1)} + b_2 \be^{(i_2)} + b_3 \be^{(i_3)}$. Then, we let $\bA \in \R^{(m + 3n) \times 3n}$ be defined as
\begin{align*}
\bA
&=
\begin{bmatrix}
\frac{1}{3}\bB & \frac{1}{3}\bB & \frac{1}{3}\bB \\
\bI & \bI & -\bI \\
\bI & -\bI & \bI \\
-\bI & \bI & \bI
\end{bmatrix}.
\end{align*}

It is obvious that the reduction runs in polynomial time. We will now prove the completeness and soundness of the reduction. To prove these, for every $\bx \in \{0, 1\}^{3n}$, we view it as a concatenation of $\bx^1, \bx^2, \bx^3$ each of length $n$. Similarly, we view each $\bw \in [0, 1]^{3n}$ as a concatenation of $\bw^1, \bw^2, \bw^3$ each of length $n$. Furthermore, let $\bxs = \bx^1 + \bx^2 + \bx^3$ and $\bws = \bw^1 + \bw^2 + \bw^3$.

The following identity will be useful in the subsequent steps of the proof:
\begin{align} \label{eq:inf-decompose-np}
\|\bA(\bw - \bx)\|_{\infty} = \max\left\{\frac{1}{3} \|\bB(\bws - \bxs)\|_{\infty}, \max_{i \in [n]}\left\{\left\|\bG\left(\begin{bmatrix} \bw^1_i \\ \bw^2_i \\ \bw^3_i \end{bmatrix} - \begin{bmatrix} \bx^1_i \\ \bx^2_i \\ \bx^3_i \end{bmatrix}\right)\right\|_{\infty}\right\}\right\}
\end{align}
where $\bG$ is the matrix in Lemma~\ref{lem:gadget-half}

\paragraph{(Completeness)} Suppose that $(V, \phi)$ is a YES instance. Consider any $\bw \in [0, 1]^n$; we will show that $\lindisc(\bA, \bw) \leq 4/3$. Since $(V, \phi)$ is a YES instance, there exists an assignment $\psi: V \to \{0, 1\}$ that assigns at least one literal to false and one literal to true in each clause. We construct our $\bx$ by letting $\begin{bmatrix} \bx^1_i \\ \bx^2_i \\ \bx^3_i \end{bmatrix}$ be the vector $\bz$ from the completeness of Lemma~\ref{lem:gadget-half} with $\bu = \begin{bmatrix} \bw^1_i \\ \bw^2_i \\ \bw^3_i \end{bmatrix}$ and the sign $b = 1 - 2\psi(v_i)$. From our choice of $\bx$ and the first property of the completeness of Lemma~\ref{lem:gadget-half}, we have
\begin{align*}
\left\|\bG\left(\begin{bmatrix} \bw^1_i \\ \bw^2_i \\ \bw^3_i \end{bmatrix} - \begin{bmatrix} \bx^1_i \\ \bx^2_i \\ \bx^3_i \end{bmatrix}\right)\right\|_{\infty} \leq 4/3.
\end{align*}
Hence, by~\eqref{eq:inf-decompose-np}, we are left to only show that $\|\bB(\bws - \bxs)\|_{\infty} \leq 4$. To do this, observe that the second property of Lemma~\ref{lem:gadget-half} can be written as 
\begin{align} \label{eq:sgn-agreement-assignment}
(1 - 2\psi(v_i)) \cdot (\bws_i - \bxs_i) \geq 0
\end{align} and the third property can be written as 
\begin{align} \label{eq:abs-bound-assignment}
|\bws_i - \bxs_i| \leq 2.
\end{align}
Consider the dot product of $j$-th row of $\bB$ and $(\bws - \bxs)$. It results in only three non-zero terms, and the choice of $\psi$ together with that of~\eqref{eq:sgn-agreement-assignment} ensures that at most two of these terms are positive and at most two of them are negative. Furthermore,~\eqref{eq:abs-bound-assignment} ensures that the absolute value of each term is at most 2. As a result, their sum has absolute value at most $4$. In other words, we have $\|\bB(\bws - \bxs)\|_{\infty} \leq 4$ as desired.

\paragraph{(Soundness)} Suppose contrapositively that $\lindisc(\bA) < 3/2$; we will show that $(V, \phi)$ is a YES instance of \naesat. Since $\lindisc(\bA) < 3/2$, there must exists $\bx \in \{0, 1\}^{3n}$ such that $\|\bA(0.5 \cdot \bone - \bx)\|_{\infty} < 3/2$. From this and~\eqref{eq:inf-decompose-np}, we have 
\begin{align*}
3/2 > \left\|\bG\left(0.5 \cdot \bone - \begin{bmatrix} \bx^1_i \\ \bx^2_i \\ \bx^3_i \end{bmatrix}\right)\right\|_{\infty}
\end{align*}
for all $i \in [n]$. Applying Lemma~\ref{lem:gadget-half}, we can conclude that $\bx^1 = \bx^2 = \bx^3$. This means that $\bxs = 3\bx^1$. Hence, from~\eqref{eq:inf-decompose-np}, we have
\begin{align*}
3/2 > \|\bB(0.5 \cdot \bone - \bx^1)\|_{\infty}.
\end{align*}
Let $\psi: V \to \{0, 1\}$ denote the assignment where we set $\psi(v_i) = \bx^1_i$. The above inequality implies that $\psi$ assigns at least one literal to false and one literal to true in each clause; otherwise, the corresponding row in $\bB$ when multiplied with $0.5 \cdot \bone - \bx^1$ will result in -3/2 or 3/2. Thus, $(V, \phi)$ is a YES instance as desired.
\end{proof}

\section{Proof of the Main Result: $\Pi_2$-hardness of Approximating $\lindisc$}

We will now prove our main result of the paper: $\Pi_2$-hardness of approximating $\lindisc$. To do this, we reduce from the $\Pi_2$-complete variant of \naesat, called \qnaesat. In the \qnaesat\ problem, we are given two sets $V_A, V_E$ of variables and a 3CNF formula $\phi$ over $V_A \cup V_E$. The goal is to determine whether, for every assignment to $V_A$, there exists an assignment to $V_E$ which makes every clause of 3CNF has at least one literal evaluated to true and at least one literal evaluated to false. \qnaesat\ is known to be $\Pi_2$-complete~\cite{eiter1995note}. The properties of our reduction are summarized below in \Cref{lem:pi2-reduction}, which immediately implies our main theorem (\Cref{thm:main}).

\begin{lemma} \label{lem:pi2-reduction}
There exists a polynomial-time reduction that takes in an instance $(V_A, V_E, \phi)$ of \qnaesat\ and produces a matrix $\bA'$ such that the following holds.
\begin{itemize}
\item (Completeness) If $(V_A, V_E, \phi)$ is a YES instance of \qnaesat, then $\lindisc(\bA') \leq 4/3$,
\item (Soundness) If $(V_A, V_E, \phi)$ is a NO instance of \qnaesat, then $\lindisc(\bA') \geq 3/2$.
\end{itemize}
\end{lemma}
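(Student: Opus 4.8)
The plan is to mirror the structure of the proof of \Cref{lem:np-reduction}, but to introduce one additional ingredient that ``forces'' the value of the matrix on the columns corresponding to universally-quantified variables to be read off from an arbitrary (adversarial) target vector. Concretely, starting from the instance $(V_A, V_E, \phi)$, I would first form the matrix $\bA$ exactly as in the NP-hardness reduction, treating $V_A \cup V_E$ as a single variable set of size $n = |V_A| + |V_E|$; this gives the clause block $\frac13 \bB$ triplicated and the three gadget blocks $(\bI,\bI,-\bI)$, $(\bI,-\bI,\bI)$, $(-\bI,\bI,\bI)$ acting on the $3n$ columns. The new twist is to append, for each variable $v_i \in V_A$, a small ``forcing gadget'' that penalizes the discrepancy unless the contribution of the three copies of column $i$ is pushed to a specific one of $\bzero$ or $\bone$ determined by $\bw$. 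The point is that in the YES case the adversary who picks $\bw$ is effectively picking an assignment to $V_A$ (by making $\bw$ close to $0$ or $1$ on those coordinates), and since the instance is a YES instance there is a response on $V_E$ making every clause NAE-satisfied, which lets us run the Case~I/II/III argument of \Cref{lem:gadget-half} on the $V_E$-columns as before and a trivial rounding on the $V_A$-columns. In the NO case, the adversary picks $\bw$ encoding a bad $V_A$-assignment, the forcing gadgets pin down $\bx$ on those coordinates, and then on $0.5\cdot\bone$ restricted to the rest the soundness half of \Cref{lem:gadget-half} together with the unsatisfiable clause argument gives a row of $\bB$ contributing $\pm 3$, hence $\lindisc(\bA') \ge 3/2$.

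Here is how I would organize the steps. \textbf{Step 1:} Define the forcing gadget. For a single coordinate I want a small matrix $\bF$ (say $2\times 3$, acting on one column-triple) and a choice of sign $b_i \in \{-1,+1\}$, together with a continuum of ``forcing targets'' $\bu^{(b)} \in [0,1]^3$, so that: (i) if $\bu = \bu^{(b)}$ then any $\bz \in \{0,1\}^3$ with $\|\bF(\bu - \bz)\|_\infty < 3/2$ must have $\bone^T\bz$ equal to $0$ (if $b=1$) or $3$ (if $b=-1$), i.e.\ $\bz \in \{\bzero,\bone\}$ with the forced value; and (ii) for \emph{every} $\bu \in [0,1]^3$ there is a $\bz$ with $\|\bF(\bu-\bz)\|_\infty \le 4/3$ and simultaneously $b'\cdot(\bone^T(\bu-\bz)) \ge 0$, $|\bone^T(\bu-\bz)| \le 2$ for a sign $b'$ of our choosing — this is exactly the interface already provided by the completeness half of \Cref{lem:gadget-half}, so likely the cleanest design is to reuse $\bG$ itself plus one or two extra rows like $\bone^T$ scaled appropriately, or a row $\be^{(1)}{}^T + \be^{(2)}{}^T + \be^{(3)}{}^T$ type penalty that is slack at $0.5\cdot\bone$ but tight at the forcing targets. \textbf{Step 2:} Assemble $\bA'$: take $\bA$ from the NP reduction and, for each $i$ with $v_i \in V_A$, append rows implementing $\bF$ on the three copies of column $i$ (padded with zeros elsewhere), scaled by a constant so their contribution to $\|\bA'(\bw-\bx)\|_\infty$ is comparable to the $4/3$ vs $3/2$ thresholds. \textbf{Step 3:} Prove completeness: given an adversarial $\bw$, round its $V_A$-coordinates to get a $V_A$-assignment $\alpha$; pick a $V_E$-assignment $\beta$ witnessing NAE-satisfaction of $\phi$ under $\alpha$; for $V_A$-columns use property (ii) of $\bF$ with $b' = 1-2\alpha(v_i)$ (and check $\|\bF(\cdot)\|_\infty \le 4/3$ plus the sign and absolute-value bounds), for $V_E$-columns use \Cref{lem:gadget-half} as before; then the row-of-$\bB$ argument goes through verbatim since every clause still gets at most two positive and at most two negative terms each of magnitude $\le 2$. \textbf{Step 4:} Prove soundness by contrapositive: if $\lindisc(\bA') < 3/2$, examine a deep hole / the specific $\bw$ that encodes a bad $V_A$-assignment $\alpha$; property (i) of $\bF$ forces $\bx$ on each $V_A$-triple to the value dictated by $\alpha$, then as in the NP proof the $\bG$-blocks force the three copies equal on the $V_E$-coordinates (since there $\bw = 0.5\cdot\bone$), and $\|\bB(\cdot)\|_\infty < 9/2$ combined with $\phi$ being NAE-unsatisfiable under $\alpha$ for \emph{every} $V_E$-extension yields a contradiction.

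The subtle design point — and the step I expect to be the main obstacle — is \textbf{Step 1}, getting the forcing gadget to be simultaneously (a) rigid enough at the forcing target that it pins $\bz$ to a prescribed one of $\{\bzero, \bone\}$ with a strict $<3/2$ margin, and (b) loose enough everywhere else that it never costs more than $4/3$ and never interferes with the sign/magnitude interface that the clause rows need. The tension is that a penalty sharp enough to distinguish ``force to $\bzero$'' from ``force to $\bone$'' at the target tends to also be violated at $0.5\cdot\bone$; the resolution is presumably to make the forcing target $\bu^{(b)}$ itself depend on $b$ and live near a vertex of the cube (so the ``correct'' $\bz$ is unambiguous and close), while the adversary in the YES case is allowed to move $\bw$ anywhere, at which point we simply pick whichever $\bz \in \{\bzero,\bone\}$ is within $\ell_\infty$-distance consistent with a $4/3$ bound — and crucially choose its sign $b'$ to match the rounded assignment. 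One also has to double check the constant scaling: the forcing rows must not push the maximum above $4/3$ in the completeness case, so if $\bF$ naturally gives discrepancy up to some value $c$ we rescale by $\tfrac{4}{3c}$ and re-examine whether the soundness margin still strictly exceeds $1$ relative to $3/2$; if not, one adjusts the gap parameters. Once the gadget is correct, Steps 2–4 are routine adaptations of the already-written NP-hardness argument, with the ``for every adversarial $\bw$'' quantifier in $\lindisc$ playing the role of the ``$\forall V_A$'' quantifier in \qnaesat.
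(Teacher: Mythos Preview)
Your high-level strategy is right: encode the universal quantifier via the adversary's choice of $\bw$, and add a gadget that pins down the $V_A$-variables accordingly. But there is a genuine gap in Step~1 as you have set it up, and the paper resolves it by a mechanism you do not mention.

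You propose to add only \emph{rows} (a matrix $\bF$ acting on the existing column-triple for each $v_i\in V_A$) and to place the soundness forcing target $\bu^{(b)}$ ``near a vertex of the cube''. This breaks the clause-row argument in soundness. If $\bu^{(+1)}$ is near $\bzero$ and $\bz$ is forced to $\bzero$, then $\bws_i-\bxs_i\approx 0$; symmetrically for $\bu^{(-1)}$ near $\bone$. Hence a clause whose three variables all lie in $V_A$ contributes essentially $0$ to $\bB(\bws-\bxs)$ regardless of the assignment, and you can no longer extract $\|\bB(\bws-\bxs)\|_\infty\ge 4.5$ from a violated clause. To salvage the clause argument you would need $\bone^T\bu^{(b)}=3/2$ for both signs, but then the targets are far from any vertex, $\bG$ no longer forces $\bz\in\{\bzero,\bone\}$ there, and a rows-only $\bF$ must simultaneously force $\bz=\bzero$ at one such point and $\bz=\bone$ at another while remaining $\le 4/3$ for every $\bu$; you have not exhibited such an $\bF$, and the symmetry $\bu\mapsto\bone-\bu$, $\bz\mapsto\bone-\bz$ makes it delicate at best.

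The paper's fix is to add $|V_A|$ \emph{new columns} (and $2|V_A|$ new rows): one auxiliary coordinate $\bw^*_i$ per $v_i\in V_A$, a row $\tfrac{8}{3}\,\be^{(i)}{}^T$ on the new block that forces $\bx^*_i$ to round $\bw^*_i$, and a coupling row $\tfrac{2}{3}(\be^{(i)}+\be^{(n+i)}+\be^{(2n+i)})^T - 2\,\be^{(i)}_*{}^T$ that ties $\bxs_i$ to $\bx^*_i$. In soundness one keeps $\bw=0.5\cdot\bone$ on the original $3n$ coordinates (so $\bG$ still forces $\bx^1=\bx^2=\bx^3$ and $\bws_i-\bxs_i=\pm 3/2$ exactly as in the NP proof) and encodes the bad $V_A$-assignment only in $\bw^*\in\{1/3,2/3\}^{n'}$. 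In completeness one reads $\psi_A$ off $\bw^*$ (not off the triples), and the extra rows are checked directly to stay $\le 4/3$. This decoupling of ``where the assignment is encoded'' from ``where the clause and $\bG$ arguments run'' is the missing idea in your plan.
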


Before we prove the lemma, let us outline the main ideas. Our reduction is in fact a minor modification of the NP-hardness reduction from the previous section. In particular, we construct the same matrix as before, and then we add $|V_A|$ additional columns (and also a certain number of appropriately constructed rows). These columns allow us to enforce the $\forall$ quantifier: selecting the entries of $\bw$ slightly above (resp. slightly below) 1/2 will force the chosen $\bx$ to be 0 (resp. 1) for those variables. 

\begin{proof}[Proof of \Cref{lem:pi2-reduction}]
Let the variables in $V_A$ be $v_1, \dots, v_{n'}$ and those in $V_E$ be $v_{n + 1}, \dots, v_n$. Furthermore, let $V = V_A \cup V_E$. We first create the matrix $\bA \in \R^{(m + 3n) \times 3n}$ as in the proof of Lemma~\ref{lem:np-reduction}. Our final matrix $\bA'$ has $n'$ additional columns and $2n'$ additional rows defined as follows:
\begin{align*}
\bA' =
& \left[\begin{array}{c|c|c|c|c|c|c}
      \multicolumn{6}{c|}{\bA} & \bzero_{n' \times n'}  \\
      \hline
      \frac{2}{3} \bI_{n'} & \bzero_{n' \times (n - n')} & \frac{2}{3} \bI_{n'} & \bzero_{n' \times (n - n')} & \frac{2}{3} \bI_{n'} & \bzero_{n' \times (n - n')} & -2\bI_{n'} \\
      \hline
      \multicolumn{6}{c|}{\bzero_{n' \times 3n}} & \frac{8}{3}\bI_{n'}  \\
    \end{array}\right]
\end{align*}

For every $\bw \in [0, 1]^{3n + n'}$, we view it as a concatenation of $\bw^1, \bw^2, \bw^3, \bw^*$ where the first three vectors have dimensions $n$ and the last vector has dimension $n'$. Similarly, we view each $\bx \in \{0, 1\}^{3n + n'}$ as a concatenation of $\bx^1, \bx^2, \bx^3, \bx^*$ where the first three vectors have dimensions $n$ and the last vector has dimension $n'$. Similar to the proof of Lemma~\ref{lem:np-reduction}, let $\bxs = \bx^1 + \bx^2 + \bx^3$ and $\bws = \bw^1 + \bw^2 + \bw^3$.

Under these notations, we may write $\|\bA'(\bw - \bx)\|_{\infty}$ as
\begin{align}
&\|\bA'(\bw - \bx)\|_{\infty} \nonumber \\
&= \max\left\{\left\|\bA\left(\begin{bmatrix} \bw^1 \\ \bw^2 \\ \bw^3 \end{bmatrix} - \begin{bmatrix} \bx^1 \\ \bx^2 \\ \bx^3 \end{bmatrix}\right)\right\|_{\infty}, \max_{i \in [n']} \left|\frac{2}{3}(\bws_i - \bxs_i) - 2(\bw^*_i - \bx^*_i)\right| , \max_{i \in [n']} \frac{8}{3}\left|(\bw^*_i - \bx^*_i)\right| \right\}. \label{eq:inf-decompose-pi2}
\end{align}

\paragraph{(Completeness)} Suppose that $(V_A, V_E, \phi)$ is a YES instance of \qnaesat. Consider any $\bw \in [0, 1]^n$; we will show that $\lindisc(\bA, \bw) \leq 4/3$. First, let $\psi_A: V_A \to \{0, 1\}$ be such that $\psi_A(v_i) = 0$ iff $\bw^*_i \leq 1/2$. Since $(V_A, V_E, \phi)$ is a YES instance, there exists $\psi_E: V_E \to \{0, 1\}$ such that $\psi_A$ and $\psi_E$ together satisfy all constraints\footnote{We say that all constraints are satisfied if every clause has at least one literal evaluated to true and at least one evaluated to false.}; let $\psi$ denote the concatenation of $\psi_A$ and $\psi_E$.  We construct $\bx^1, \bx^2, \bx^3$ in the same manner as in the completeness proof of~\Cref{lem:np-reduction} (with respect to $\psi$), and then let $\bx^*_i = \psi_A(v_i)$ for all $i \in [n']$. 

Via the same argument as in the completeness proof of~\Cref{lem:np-reduction}, we have $$\left\|\bA\left(\begin{bmatrix} \bw^1 \\ \bw^2 \\ \bw^3 \end{bmatrix} - \begin{bmatrix} \bx^1 \\ \bx^2 \\ \bx^3 \end{bmatrix}\right) \right\|_{\infty} \leq 4/3.$$ Hence, we are left to only show that the last two terms in~\eqref{eq:inf-decompose-pi2} are at most 4/3. For the last term, we also have
\begin{align*}
\frac{8}{3} |\bw^*_i - \bx^*_i| \leq \frac{8}{3} \cdot 1/2 = 4/3,
\end{align*}
where the inequality comes from our choice of $\bx^*_i = \psi_A(v_i)$

Consider the middle term of~\eqref{eq:inf-decompose-pi2}. Due to symmetry, we may w.l.o.g. consider only the case $\psi_A(v_i) = 0$. Recall the properties~\eqref{eq:sgn-agreement-assignment} and~\eqref{eq:abs-bound-assignment}. Using these, we have
\begin{align*}
\frac{2}{3}(\bws_i - \bxs_i) - 2(\bw^*_i - \bx^*_i) \overset{\eqref{eq:sgn-agreement-assignment}}{\geq} - 2(\bw^*_i - \bx^*_i) \geq -2 \cdot 1/2 = -1,
\end{align*}
and
\begin{align*}
\frac{2}{3}(\bws_i - \bxs_i) - 2(\bw^*_i - \bx^*_i) \leq \frac{2}{3}(\bws_i - \bxs_i) \overset{\eqref{eq:abs-bound-assignment}}{\leq} 4/3.
\end{align*}
Hence, we can conclude that $\|\bA'(\bw - \bx)\|_{\infty} \leq 4/3$ as desired.

\paragraph{(Soundness)} Suppose contrapositively that $\lindisc(\bA') < 3/2$; we will show that $(V_A, V_E, \phi)$ is a YES instance of \qnaesat. Consider any assignment $\psi_A: V_A \to \{0, 1\}$ to $V_A$. We will construct an assignment $\psi_E: V_E \to \{0, 1\}$ such that $\psi_A$ and $\psi_E$ together satisfies all constraints.

To do this, let $\bw^* \in \R^n$ be a vector such that
\begin{align*}
\bw^*_i =
\begin{cases}
1/3 & \text{ if } \psi_A(v_i) = 0, \\
2/3 & \text{ if } \psi_A(v_i) = 1.
\end{cases}
\end{align*}
Then, let $\bw = \begin{bmatrix} 0.5 \cdot \bone_{3n} \\ \bw^* \end{bmatrix}$. From $\lindisc(\bA') < 3/2$, there must exist $\bx \in \{0, 1\}^{3n + n'}$ such that $\|\bA'(\bw - \bx)\|_{\infty} < 3/2$. Simiar to before, let $\psi: V \to \{0, 1\}$ be defined as $\psi(v_i) = \bx^1_i$. Recall from the soundness proof of \Cref{lem:np-reduction} that $\psi$ satisfies all the constraints and furthermore $\bx^1_i = \bx^2_i = \bx^3_i$. We claim that $\psi$ is consistent with $\psi_A$. To see that this is the case, first observe that $\bx^*_i$ must be equal to $\psi_A(v_i)$ for all $i \in [n']$; otherwise, we have
\begin{align*}
\|\bA'(\bw - \bx)\|_{\infty} \overset{\eqref{eq:inf-decompose-pi2}}{\geq} \frac{8}{3} \left|(\bw^*_i - \bx^*_i)\right| \geq \frac{8}{3} \cdot \frac{2}{3} > 3/2.
\end{align*}
Next, using the middle term in~\eqref{eq:inf-decompose-pi2}, we have
\begin{align*}
3/2 &> \left|\frac{2}{3}(1.5 - \bxs_i) - 2(\bw^*_i - \bx^*_i)\right| \\
&= \left|\frac{2}{3}(1.5 - 3\psi(v_i)) - 2\left(\frac{1}{3} - \frac{2}{3} \psi_A(v_i)\right)\right| \\
&= \left|\frac{1}{3} + \frac{4}{3} \psi_A(v_i) - 2\psi(v_i)\right|, 
\end{align*}
which implies that $\psi_A(v_i) = \psi(v_i)$ for all $i \in [n']$.

Thus, if we let $\psi_E$ be $\psi$ restricted on $V_E$, then $\psi_A$ and $\psi_E$ together satisfies all the constraints.
\end{proof}

\section{Discussion and Open Questions}

In this note, we prove that approximating $\lindisc$ to within a factor of $9/8 - \eps$ is $\Pi_2$-hard. As stated above, the best known polynomial time algorithm only gives an approximation ratio of $O(2^n)$~\cite{LiN20}. It remains an interesting open question to close this gap. Specifically, a concrete direction is to prove NP-hardness or $\Pi_2$-hardness of approximation for all constant factors. For the latter, it should be noted that, for the related problem of computing the covering radius of a lattice or a linear code, approximating it to within a factor of 2 belongs to the class AM~\cite{GuruswamiMR04} and is thus unlikely to be $\Pi_2$-hard. However, we are not aware of any similar barrier for approximating linear discrepancy.

\bibliographystyle{alpha}
\bibliography{refs}

\end{document}